\newif\ifACM
\title{SACR\'{E} BLEU: Self-Assessed Creator Royalties \'{E}nforced by Balancing Liquidity Estimation \& Utility}
\newcommand{\abstractContents}{
    The secondary market for Ethereum non-fungible tokens (NFTs) has resulted in over \$1.8bn being paid to creators in the form of a sales tax commonly called creator royalties.
    This was despite royalty payments being enforced by no more than social contract alone.
    Predictably, such an incentive structure led to zero-royalty alternatives becoming abundant and payments dwindled.
    A purely programmatic solution to royalty enforcement is hampered by the prevailing NFT standard, ERC\nobreakdash-721, which is ignorant of sale values and royalty enforcement therefore relies on (potentially dishonest) third parties.
    We thus introduce an incentive-compatible mechanism for which there is a single rationalisable solution, in which royalties are paid in full, while maintaining full ERC\nobreakdash-721 compatibility.
    The mechanism constitutes the core of ERC\nobreakdash-7526.
}
\newcommand{\emailaddr}[1]{#1@solidifylabs.com}
\newcommand{\emaillink}[1]{\href{mailto:\emailaddr{#1}}{\emailaddr{#1}}}
\newcommand{\equalContrib}{Both authors contributed equally.}
    \author{David Huber}
    \email{\emailaddr{dave}}
    \affiliation{
        \institution{Solidify Labs}
        \country{Austria}
    }
    \author{Arran Schlosberg}
    \email{\emailaddr{arran}}
    \newcounter{savecntr}
    \newcounter{restorecntr}
    \author{
        David Huber\footnote{\equalContrib}\\
        \texttt{\emaillink{dave}}
        \and
        Arran Schlosberg\setcounter{restorecntr}{\value{footnote}}\setcounter{footnote}{\value{savecntr}}\footnotemark\\
        \texttt{\emaillink{arran}}
        \setcounter{footnote}{\value{restorecntr}}
    }
\theoremstyle{plain}
\newtheorem{lem}{Lemma}
\newtheorem{cor}{Corollary} 
\newtheorem{thm}{Theorem}
\theoremstyle{remark}
\newtheorem{rem}{Remark}
\theoremstyle{definition}
\newtheorem{rul}{Rule}
\newtheorem{des}{Requirement}
\newtheorem{dfn}{Definition}
\crefname{cor}{corollary}{corollaries}
\crefname{des}{requirement}{requirements}
\crefname{rul}{rule}{rules}
\crefname{dfn}{definition}{definitions}
\begin{document}

\maketitle

\begin{abstract}
    \abstractContents
\end{abstract}

\newcommand\numberthis{\addtocounter{equation}{1}\tag{\theequation}}

\newcommand{\erc}[1]{ERC\nobreakdash-{#1}}

\newcommand{\player}[1]{\mathsf{#1}} 
\newcommand{\curr}{\player{P}}
\newcommand{\prev}{\player{O}}
\renewcommand{\P}{\curr}
\newcommand{\Q}{\prev}
\renewcommand{\O}{\prev}

\newcommand{\paid}{a}
\newcommand{\qeq}{\overset{?}{=}}
\newcommand{\val}{v}
\newcommand{\fmv}{m}
\newcommand{\fee}{\phi}
\newcommand{\price}{\pi}
\newcommand{\priceInv}{\price^{-1}}
\newcommand{\cost}{c}
\newcommand{\rev}{x}
\newcommand{\abs}[1]{\lvert#1\rvert}
\newcommand{\coll}{\kappa}
\newcommand{\bribe}{\beta}
\newcommand{\roi}{\lambda}

\section{Introduction}

\subsection{Background}

How do you charge a royalty on the sale of an item when you don't know the price paid?
How do you enforce a sales tax when there's no proof of a sale having even occurred?

The ability to receive a sales tax from secondary sales of non-fungible tokens (NFTs), commonly called creator royalties, promised creators a chance to benefit from their work accruing in value.
Without royalties, an artist selling their piece for some fixed initial price is left watching others profit as the value of their creation grows over time.
The exclusion of creators from the fruits of their labour, even those only realised later, is a disincentive to their market participation.

As of October 2022, \$1.8bn in NFT royalties had been paid\cite{GalaxyNFTRoyalties}.
Despite many creators participating in the false belief that payments were programmatically enforced, they were in fact a social contract upheld by goodwill and mere ignorance of alternatives.
Unsurprisingly, a race to zero royalties ensued\cite{NansenNFTRoyalties}.
Although this suggests that enforcement needs to be handled by smart contracts, nuances in the definition of the Ethereum NFT standard, \erc{721}\cite{ERC721}, make this difficult.

\erc{721} doesn't explicitly cater for token sales.
Instead, it allows an NFT owner to approve a third-party \emph{operator} to transfer the token on their behalf.
By approving marketplace contracts with verifiable rules around trades, owners can list tokens for sale---tokens are transferred if and only if the requisite consideration changes hands.
This is typically the point at which the marketplace also forwards a portion of the sale price as a royalty.
Note, however, that \erc{721} contracts are ignorant of the transfer of funds in this scenario and there is no computational requirement to make them aware.
From the perspective of an \erc{721}-compliant contract, there is no difference between a token sale, gift, or transfer to an address controlled by the same entity.
As not all marketplaces respect payment of sales taxes, creators must therefore trust or force buyers and sellers---the ultimate payers of the royalties---to trade on marketplaces that act in the best interests of the creators.

We thus introduce an incentive-compatible mechanism to enable NFT sales taxes with decentralised enforcement.
The mechanism constitutes the core of \erc{7526}\cite{ERC7526} and, unlike other proposed solutions, can be implemented without altering \erc{721}'s approved transfers.

\subsection{Related Work}

We are unaware of any proposed methods of royalty enforcement that act to inventivise participants through mechanism design.
All commonly used solutions are either limited to communicating desired royalty amounts, or somehow disable \erc{721} functionality to provide programmatic enforcement.

\subsubsection{ERC-2981}

Perhaps the most widely adopted of the royalty solutions, \erc{2981}\cite{ERC2981} is an extension to \erc{721} that allows NFT contracts to signal expected royalty payments.
It therefore only acts to broadcast information but not to enforce payment, and the authors stated that ``payment must be voluntary, as transfer mechanisms... [do] not always imply a sale occurred''---this observation is key to the difficulty of programmatic enforcement and was a primary motivator of our work.
While their contribution allowed NFT creators to control the source of truth for calculation of expected royalties, it still relied on third parties (with different incentives) to honour the suggested payment.

\subsubsection{Royalty Registry}

Although the introduction of \erc{2981} empowered NFT creators, its functionality couldn't be applied retroactively to existing contracts as the majority are immutable and non-upgradable.
To counter this, the Royalty Registry\cite{RoyaltyRegistry} acts as a single source of truth, accessible on-chain, for all NFT contracts.
It honours \erc{2981}-compliant contracts, acting merely as a proxy in such cases, otherwise allowing the owners of older NFT contracts to register royalty parameters.
As another broadcast-only approach, it shares the same shortcomings as \erc{2981}.

\subsubsection{Operator filters}

An easily implemented solution to combatting dishonest actors is simply to block them from being allowed to act as approved operators\cite{OperatorFilter,ERC721C}.
This, however, was broadly abandoned\cite{OpenSeaOptionalRoyalties}.
Beyond possible circumvention of the process, breaking technical guarantees of permissionless integrations---disallowing holders from selling their assets on any platform---was considered antithetical to the ethos of the ecosystem.

\subsubsection{ERC-4910}

Less-broadly known than \erc{2981}, a second royalty standard was defined in \erc{4910}\cite{ERC4910}.
An inspection of its reference implementation\cite{ERC4910ReferenceImpl} reveals that it achieves royalty enforcement by disabling some transfer mechanisms, which breaks fundamental \erc{721} compatibility, leaving users unable to trade their tokens on standard marketplaces.

\section{Problem}

\subsection{Definitions}

\begin{dfn}[Pseudonyms]
    Each player $\P$ has a set of one or more blockchain addresses $A_\P$ that act as pseudonyms through which the player acts---we say that the player \emph{controls} each address ${a \in A_\P}$.
    The player controlling a specific address is private and known only to $\P$.
    We assume that players control disjoint sets of addresses; i.e. ${\O \neq \P \iff A_\O \cap A_\P = \emptyset}$.
\end{dfn}

\begin{dfn}[Ownership]
    A token $t$ is owned by a single address $a_t$, which is known to everyone.
    There is therefore a transitive relationship from token to player: ${a_t \in A_{\P_t}}$. The player $\P_t$ can also be said to \emph{own} the token but this is similarly private, known only to $\P$.
    When obvious from context that we are referring to a token owner, the $_t$ subscript is elided for brevity.
\end{dfn}

\begin{dfn}[Transfer]
    The transfer of a token occurs when there is a change in ownership from $a'_t$ to $a_t$ where ${a'_t \neq a_t}$.
    This may or may not result in a change of player.
    Similarly, a string of consecutive transfers results in owners ${a^{0}_t,\dots,a^n_t}$ with a change of player at each step being possible but not necessary.

    To aid in understanding, whenever a transfer occurs, we denote the old owning address as $a'_t$ and the new one as $a_t$.
    Similarly, the old owning player will be $\O$ and the new one $\P$, regardless of whether $\O \qeq \P$ is known, as inferring inequality thereof is a motivating factor of this work.
\end{dfn}

\begin{dfn}[Cost]
    The transfer of a token incurs a cost $\cost$ for the receiving player $\P$.
    This cost might be zero, e.g. for gifts or transfers between addresses controlled by the same player.
\end{dfn}

\begin{dfn}[Revelation]
    \label{dfn:reveal}
    Upon token transfer the, receiving player $\P$ will reveal information about the nature of the transfer by either disclosing a value $\rev \in \mathbb{R}$ to the mechanism, or by electing not to do so before their turn expires.
\end{dfn}

\begin{dfn}[Fee function]
    \label{dfn:fees}
    There is a token-specific fee function $0 < \fee_t(\rev)$ that is strictly monotonically increasing in $\rev$ and has a best Lipschitz constant $L_\fee < 1$ (i.e. the fee\footnote{That is, fee $\fee$ (fo fum).} grows slower than the input parameter).
    This fee is always paid to the mechanism upon the disclosure described in \Cref{dfn:reveal}.
\end{dfn}

\begin{dfn}[Historical non-fee-paying owners]
    Consider the sequence of token-owning addresses beginning with the last one to pay a fee, or from token creation if no fee has ever been paid.
    Let $H$ be the set of distinct addresses in this historical sequence, excluding the current owner:
    \begin{equation*}
        H = \{\paid^{k},a^{k+1},\dots,a^{n-1}\}\text{, where }n,k\in \mathbb{N}_0 \text{, }k < n\text{, and }a^n = a_t
        .
    \end{equation*}
    If the current owner has disclosed some $\rev$ and paid $\fee({\rev})$, then $H = \emptyset$.
\end{dfn}

\begin{dfn}[Auto-sale price]
    \label{dfn:price}
    There is a token-specific price function $0 < \pi_t(\rev)$ that is strictly monotonically increasing in $\rev$.
\end{dfn}

\begin{dfn}[Valuation]
    \label{dfn:valuation}
    Each player $\P$ has two valuations of the token, which may differ due to information assymetry:
    \begin{itemize}
        \item $\fmv_\P$ denotes their estimate of the token's free-market value, i.e. the price below and above which a buyer will or will not be found right now, respectively; and
        \item $\val_\P$ denotes their self-assessed valuation, in current terms, of hodling the token long-term.
    \end{itemize}
\end{dfn}

\subsection{Desiderata}
\label{sec:desiderata}

\begin{des}[Truthful revelation]
    %
    %
    %
    \label{des:revelation}
    For every transfer of the token $t$ that results in a change of the owning player, i.e. ${\P \neq \Q}$, the best-response strategy of the new owner $\P$ must result in:
    \begin{enumerate}
        \item Disclosure of some value $\rev$ such that $\price(\rev) = \fmv_\P$ (i.e. $\P$ truthfully revealing the change in player, coupled with their true belief of the free-market value of $t$); and
        \item $\P$ paying a royalty of $\fee(\rev)$ to the mechanism.
    \end{enumerate}
\end{des}

\begin{des}[Self-transfer]
    \label{des:self-transfer}
    Should a player transfer a token between addresses that they control, i.e. both ${a, a' \in A_\P}$, then no such disclosure nor fee are required.
\end{des}

\begin{rem}
    \label{rem:self-transfer}
    A player may choose to have a token owned by any address under their control. From the perspective of the mechanism, the rationale is irrelevant, suffice to require that no fee is required should the ``beneficial owner'' (the player) remain unchanged. Similarly, we assume that there is no change in utility for a player moving a token between their own addresses\footnote{In real-world situations this is common practice for a variety of non-economic reasons, such as vanity, and must therefore be allowed by the mechanism.}.
\end{rem}


\section{Mechanism}

\subsection{Rules}

\begin{rul}[First-move]
    \label{rul:first}
    After token transfer, the owner $\P$ moves first, electing whether or not to disclose some $x$, pay fees according to $\phi(x)$ and collapse $H$.
\end{rul}

\begin{rul}[Take-back]
    \label{rul:takeback}
    Any address ${a \in H}$ may assert ownership of the token and transfer it from $a_t$ to itself, free of charge and without refunding $\cost$, restarting the mechanism at \Cref{rul:first} with the taker assuming the role of $\P$.
\end{rul}

\begin{rul}[Auto-sale]
    \label{rul:autosale}
    This rule only applies if the current owner $a_t$ chose to disclose $\rev$ and pay $\fee(\rev)$.
    For some set period of time, any address---regardless of membership in $H$---may purchase the token from $a_t$ for price $\price(\rev)$.
    Such a sale restarts the mechanism at \Cref{rul:first} with the new owner\footnote{If the auto-sale purchaser were not liable for a fee like any other address then a malicious player would disclose $\priceInv(0)$ and immediately pay $0$ to bypass the mechanism.}, and also reimburses the fee paid by $a_t$ in \Cref{rul:first}.
    Since the option to purchase expires\footnote{Auto-sale is reminiscent of a COST / Harberger tax, however, it is important to note that it expires. The mechanism is better understood as \emph{eventual} ownership than as \emph{partial}.}, we model it as a discrete turn.
\end{rul}

\subsection{Analysis}

We prove that the proposed mechanism induces all players to act truthfully under a best-response strategy.
This results in an unambiguous sequence of moves, with a single rationalisable solution, under which a royalty fee is paid iff the player owning the token changes.
This royalty fee is a function of the owning player's truthfully disclosed estimation of fair-market value.
Unless stated otherwise we assume that side-band collusion between players is impossible, but do demonstrate bounds on collusion and general royalty avoidance.

\begin{lem}
    \label{lem:takeback}
    Invoking \Cref{rul:takeback} to reclaim ownership of a token is the best response for player $\Q$ if player $\P \neq \Q$ chooses not to pay any fee in \Cref{rul:first}.
\end{lem}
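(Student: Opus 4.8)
The plan is to compare player $\Q$'s continuation payoff --- at the point where it decides whether to reclaim $t$ --- under its two options, invoking \Cref{rul:takeback} or declining to act, and to show that reclaiming weakly dominates.

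First I would verify that \Cref{rul:takeback} is in fact available to $\Q$. Since $\P \neq \Q$, player $\Q$ is the previous owner $\O$ and controls the old owning address $a'_t$; because $\P$ declined the fee in \Cref{rul:first}, the set $H$ has not been collapsed, and $a'_t \in H$ in every case --- either $a'_t$ is a strictly-historical owner, or it is the most recent fee-payer $\paid^k$, both of which the definition of $H$ places in that set. So $\Q$ may reclaim the token.

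Next I would bound the two payoffs. Declining to act changes neither $\Q$'s wealth nor its holdings, and nothing that can follow (the token remaining with $\P$, or being reclaimed by some other member of $H$) involves $\Q$; its continuation payoff is therefore $0$. Invoking \Cref{rul:takeback} costs $\Q$ nothing --- the rule makes the reclaim free of charge, and the induced transfer is to an address $\Q$ already controls, so its $\cost$ is zero --- and makes $\Q$ the owner in a game restarted at \Cref{rul:first}. In that restarted game $\Q$ can never be compelled to pay anything: \Cref{rul:first} is an election, \Cref{rul:autosale} can be triggered against $\Q$ only if $\Q$ itself elected to pay a fee, and any further take-back by another player is again costless to $\Q$. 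Hence $\Q$'s continuation payoff after reclaiming is at least $0$ --- in the worst case the token is taken back from $\Q$ at once and $\Q$ is exactly where it began, and otherwise $\Q$ holds a token of non-negative value (cf.\ \Cref{dfn:valuation}) while keeping the option to pay a fee later only when it is profitable. Comparing, reclaiming yields a payoff $\geq 0 = $ the payoff of doing nothing, so it is a best response.

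I expect the final step to be the crux. After $\Q$ reclaims, the address $a_t$ itself enters $H$, so the restarted subgame admits a ``hot-potato'' recursion, and one must rule out any continuation of it driving $\Q$'s payoff below zero. I would do this not by unrolling the recursion but by recording two structural facts: (i) every outflow the mechanism can impose --- a fee $\fee(\rev)$, an auto-sale purchase price $\price(\rev)$, or a non-refunded $\cost$ --- is incurred only as a consequence of a move $\Q$ itself makes, so $\Q$ can always secure a non-negative continuation; and (ii) holding the token is never a liability, since valuations are non-negative and a player may divest at no cost. With (i) and (ii) the lower bound is immediate, and the lemma follows.
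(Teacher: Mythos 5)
Your proposal is correct and follows essentially the same argument as the paper: invoking the take-back costs $\Q$ nothing and yields a token of non-negative value ($\max(\fmv_\Q,\val_\Q)\ge 0$), while declining yields nothing, so reclaiming weakly dominates. You simply make explicit two points the paper leaves implicit --- that $a'_t\in H$ so the rule is available, and that the restarted subgame cannot force $\Q$'s continuation payoff below zero --- which is careful elaboration rather than a different route.
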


\begin{proof}
    Player $\Q$ gains $\max \left(\fmv_\Q, \val_\Q \right) \ge 0$ for no cost by invoking the takeback, whereas not doing so has no benefit.
    A free lunch---sacr\'{e} bleu!
\end{proof}

\begin{cor}
    \label{cor:reveal}
    If a transfer results in a \emph{change} of controlling player to $\P \neq \Q$, their best response is to reveal some value $\rev$ to protect against takeback.
\end{cor}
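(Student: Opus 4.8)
The plan is to compare, at the \Cref{rul:first} turn that follows the transfer, the only two moves available to $\P$ --- disclose some $\rev$ (which by that rule also means paying $\fee(\rev)$ and collapsing $H$) or decline --- and to show that declining is strictly dominated, so that every best response discloses. First I would verify that the takeback threat is live at this turn. Because the owning player changed, $\Q$'s former address $a'_t$ appears in the historical sequence as $a^{n-1}$ and, since $k < n$, it lies in $H$; disjointness of address sets gives $a'_t \in A_\Q$ and $a'_t \notin A_\P$. So $H$ is nonempty and contains an address controlled by $\Q \ne \P$: if $\P$ declines, the hypothesis of \Cref{lem:takeback} holds and $\Q$ reclaims the token under \Cref{rul:takeback}, leaving $\P$ having sunk the transfer cost $\cost$ with the token --- worth $\max(\fmv_\P,\val_\P)\ge 0$ to $\P$ --- removed and nothing refunded, for a payoff of $0$.

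Next I would show that disclosing does strictly better. Disclosing $\rev$ and paying $\fee(\rev)$ empties $H$, so \Cref{rul:takeback} is thereafter unavailable to every address --- this is exactly the ``protection against takeback'' the corollary names --- and it then suffices to exhibit a single disclosure with strictly positive payoff. Taking $\rev$ with $\price(\rev)$ small works: provided the mechanism is not confiscatory, i.e. some market participant can profitably hold $t$ through it, a sufficiently small \Cref{rul:autosale} price clears, and since that rule reimburses the fee $\P$ paid in \Cref{rul:first}, $\P$ walks away with $\price(\rev) > 0$. (If $\P$ would sooner keep the token, choosing $\rev$ with $\price(\rev)\ge\fmv_\P$ instead leaves the auto-sale unexercised and nets $\max(\fmv_\P,\val_\P)-\fee(\rev)$, again positive in the same regime.) Either way $\P$'s best disclosure strictly beats the declining payoff of $0$, so declining is not a best response --- which is \Cref{cor:reveal}.

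The step needing the most care is making ``not confiscatory'' precise: from \Cref{dfn:fees} --- positivity, strict monotonicity, and the Lipschitz bound $L_\fee < 1$, the sense in which the fee grows slower than its input --- together with \Cref{dfn:price}, I would have to argue that the royalty charged at a fair-market-priced disclosure stays below the token's value, so that the chain of successive auto-buyers a small $\price(\rev)$ invites actually clears (each intermediary recouping its fee at the next sale) rather than unravelling. The genuinely degenerate case $\fmv_\P = \val_\P = 0$, where $\P$ is indifferent and ``best response'' is vacuous, I would flag as lying outside the claim's scope.
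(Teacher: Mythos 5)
Your proposal is correct and follows essentially the same route the paper intends: the corollary is left unproved there as an immediate consequence of \Cref{lem:takeback} (declining triggers a credible take-back that strips $\P$ of the token, while some disclosure leaves $\P$ with positive utility, cf.\ the later $\max(\fmv_\P, \val_\P - \fee(\rev_\P)) - \cost$ computation in \Cref{lem:truevalue}), and your write-up simply makes that dominance argument explicit. Your added care about the non-confiscatory regime and the degenerate $\fmv_\P = \val_\P = 0$ case flags genuine implicit assumptions of the paper (which already surface in the weak inequality $\max(\fmv_\Q,\val_\Q)\ge 0$ of \Cref{lem:takeback}) but does not change the approach.
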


\begin{lem}
    \label{lem:hplayer}
    In equilibrium, every address in $H$ is controlled by the same player. This player is the token owner; i.e. $H \subset A_\P$.
\end{lem}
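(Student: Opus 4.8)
The plan is to walk along the chain of owning addresses $a^k, a^{k+1}, \dots, a^n = a_t$ recorded since the last fee payment and argue that \Cref{cor:reveal} forbids any link of it from crossing a player boundary; by transitivity the whole chain — and in particular $H$ — then lies in $A_\P$. If $H = \emptyset$ there is nothing to prove, so suppose $n > k$. By construction of $H$, the address $a^k$ is the most recent one to have paid a fee (or is the creating address, if none ever has), so none of $a^{k+1}, \dots, a^{n-1}$ paid a fee when it acquired $t$; and $a^n = a_t$ has not paid either, since a disclosure by the current owner would collapse $H$ to $\emptyset$ by the definition of historical non-fee-paying owners.

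For the crux I would fix $j \in \{k+1, \dots, n\}$ and suppose, towards a contradiction, that the transfer $a^{j-1} \to a^j$ changed the controlling player. Whoever controlled $a^j$ then moved first under \Cref{rul:first}, facing the take-back threat of \Cref{lem:takeback}, so by \Cref{cor:reveal} their best response was to disclose some $\rev$ — which by \Cref{dfn:fees} is inseparable from paying $\fee(\rev) > 0$, equivalently from collapsing $H$ (\Cref{rul:first}). In equilibrium this best response is played, so $a^j$ paid a fee, contradicting the paragraph above (and for $j = n$ contradicting $H \neq \emptyset$ outright). Hence no link of the chain changes the controlling player, so $a^k, a^{k+1}, \dots, a^n$ are all controlled by a single player; since $a^n = a_t$, that player is $\P$, and therefore $H = \{a^k,\dots,a^{n-1}\} \subset A_\P$.

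The point needing care — and the only real obstacle I foresee — is that a token may be acquired in three ways: an ordinary approved transfer, a take-back under \Cref{rul:takeback}, or an auto-sale purchase under \Cref{rul:autosale}. The inductive step must hold for each, but this is not deep: every one of them re-enters the game at \Cref{rul:first} with the acquirer cast as $\P$, so \Cref{cor:reveal} applies verbatim. The only genuine subtlety is the bookkeeping of who counts as the last fee payer after an auto-sale reimbursement, which shifts where the chain $a^k,\dots,a^n$ starts but not the conclusion that every address on it is controlled by $\P$.
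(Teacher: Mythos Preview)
Your proof is correct and reaches the same conclusion, but by a different route than the paper. The paper gives a one-line instability argument from the \emph{current} state: if some $\O \neq \P$ controls an address in $H$, then by \Cref{lem:takeback} that player would exercise the take-back right now and then pay a (small enough) fee to collapse $H$---so such a state cannot be an equilibrium. You instead walk the chain \emph{historically} and invoke \Cref{cor:reveal} at each link, arguing that every player-changing transfer would already have triggered a fee payment and hence reset $H$ before the present state was reached. Your approach is more explicit about the dynamics and conscientiously treats the three acquisition modes; the paper's is terser and reasons about the prior owner's present deviation rather than the receiving player's past disclosure. The two arguments are essentially dual---yours leans on $\P$'s incentive to reveal (\Cref{cor:reveal}), the paper's on $\O$'s incentive to reclaim (\Cref{lem:takeback})---and both are sound.
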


\begin{proof}
    By contradiction. If $A_\O \cap H \neq \emptyset$ for any $\O \neq \P$, player $\O$ will invoke the take-back as proven in \Cref{lem:takeback} and collapse $H$ by paying a fee less than the utility gained from the takeback.
\end{proof}

\begin{rem}
    When control of a token is transferred to a different player, although $\P$ and $\Q$ lack explicit knowledge of each other's identities, both know that they are different players because of the disjoint nature of their address sets. Notably, \emph{only} these players know that $\P \neq \Q$ and the mechanism can therefore only induce these parties to reveal this information.
\end{rem}

\begin{cor}
    \label{cor:payiff}
    Under the best-response strategy of the receiving player $\P$, revealing a value $x$ is a proxy for a change in owning player; i.e. $H \subset A_{\Q} \xrightarrow{reveal} H \subset A_{\P} \iff \P \neq \Q$.
\end{cor}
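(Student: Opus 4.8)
The statement is an ``iff'', so the plan is to prove the two directions separately and then read off the displayed transition on $H$; no new machinery is needed, since \Cref{cor:reveal}, \Cref{lem:hplayer}, \Cref{des:self-transfer} (with \Cref{rem:self-transfer}) and \Cref{dfn:fees} already do all of the work.

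For the direction $\P \neq \Q \Rightarrow$ reveal, I would start from the (assumed) equilibrium state just before the transfer, in which the then-owner $\Q$ satisfies $H \subset A_\Q$ by \Cref{lem:hplayer}, whether or not $\Q$ had previously paid. Immediately after the transfer to $\P$, the address $\Q$ transferred from joins the historical sequence while the new current owner's address is excluded, so the pre-move state still has $H \subset A_\Q$. Because $\Q$ controls all of $H$, \Cref{lem:takeback} (equivalently \Cref{cor:reveal}) gives $\Q$ a standing costless takeback unless $\P$ pays, so $\P$'s best response under \Cref{rul:first} is to disclose some $\rev$ and collapse $H$; after this $H = \emptyset$ and hence $H \subset A_\P$. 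That is precisely $H \subset A_\Q \xrightarrow{reveal} H \subset A_\P$.

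For the converse I would argue the contrapositive: suppose $\P = \Q$, a self-transfer. By \Cref{des:self-transfer} and \Cref{rem:self-transfer} such a transfer requires no disclosure and leaves $\P$'s utility unchanged, and since $\Q = \P$ controlled every address in $H$ before the transfer (\Cref{lem:hplayer}), $\P$ still does afterwards, so no rival player can threaten a takeback. Thus the only effect of revealing would be to pay the strictly positive fee $\fee(\rev) > 0$ of \Cref{dfn:fees} for no offsetting gain, so withholding the reveal strictly dominates and is the best response. Combining the two directions yields the claim.

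The step I expect to be the main obstacle is making the converse airtight: one must rule out any covert incentive for a self-transferring player to reveal, the natural candidate being the fee-reimbursing auto-sale of \Cref{rul:autosale}. The resolution is that an auto-sale only lets $\P$ sell at $\price(\rev)$ and recover $\fee(\rev)$ at the cost of losing the token---an option available to $\P$ at any time and unrelated to the self-transfer---whereas if no auto-sale buyer appears the fee is simply forfeited; hence a self-transfer is never the moment at which revelation becomes worthwhile, and revelation stays dominated whenever $\P = \Q$.
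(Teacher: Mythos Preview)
Your argument is essentially correct and matches what the paper leaves implicit (the paper states \Cref{cor:payiff} as an immediate corollary of \Cref{lem:hplayer} and \Cref{cor:reveal} without a separate proof). The two-direction split, the use of \Cref{lem:hplayer} to ensure $H\subset A_\Q$ just before the transfer, the appeal to \Cref{cor:reveal} for the forward direction, and the ``no takeback threat plus strictly positive fee'' argument for the converse are exactly the intended reasoning; your closing paragraph on the auto-sale is a thoughtful robustness check that the paper does not spell out.

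One small logical slip: in the converse you invoke \Cref{des:self-transfer} as a premise. That is a \emph{desideratum}---a property we are trying to show the mechanism delivers---so citing it here is circular. Fortunately it is not load-bearing: the actual work is done by \Cref{lem:hplayer} (so $H\subset A_\P$ after a self-transfer and no rival can threaten a takeback) together with $\fee(\rev)>0$ from \Cref{dfn:fees}, which already makes non-revelation strictly dominant. You may keep the appeal to \Cref{rem:self-transfer} for the ``no change in utility'' assumption, but drop the reference to \Cref{des:self-transfer}.
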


\begin{lem}
    \label{lem:truevalue}
    Under \Cref{rul:autosale}, for auto-sale price $\price(\rev)$, the best response of player $\P$ is to truthfully disclose their fair-market valuation via $\rev_\P = \priceInv(\fmv_\P)$.
\end{lem}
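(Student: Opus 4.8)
The plan is to treat the disclosed value $\rev$ from \Cref{rul:first} as $\P$'s sole decision variable — by \Cref{cor:reveal} we may assume $\P$ discloses something, so \Cref{rul:autosale} is active — and to write down $\P$'s continuation payoff $U(\rev)$, keeping in mind that one and the same $\rev$ fixes the forfeitable fee $\fee(\rev)$ \emph{and} the auto-sale price $\price(\rev)$. The pivotal structural fact, read straight off \Cref{dfn:valuation}, is that during the (single, time-limited) auto-sale turn a purchaser appears precisely when the posted price does not exceed $\P$'s free-market estimate $\fmv_\P$. So: if $\price(\rev)<\fmv_\P$ the auto-sale fires, $\P$ receives $\price(\rev)$ and is reimbursed $\fee(\rev)$ — net $\price(\rev)$, token gone; if $\price(\rev)>\fmv_\P$ it never fires, so $\P$ retains the token, worth $\max(\val_\P,\fmv_\P)$ to them (less any friction of a later resale, which only helps the argument), while forfeiting $\fee(\rev)$.

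The core of the proof is then two one-line monotonicity observations. On the ``fires'' branch $U(\rev)=\price(\rev)$ is strictly increasing in $\rev$ by \Cref{dfn:price}, so every disclosure with $\price(\rev)<\fmv_\P$ is strictly dominated by a slightly larger one that still undercuts the market — $\P$ never posts below $\fmv_\P$. On the ``never fires'' branch $U(\rev)=\max(\val_\P,\fmv_\P)-\fee(\rev)$ is strictly decreasing in $\rev$ by \Cref{dfn:fees} (the fee is strictly increasing and strictly positive), so every disclosure with $\price(\rev)>\fmv_\P$ is strictly dominated by the smallest one that still dodges a snipe — $\P$ never posts above $\fmv_\P$. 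Both forces drive $\rev$ to the crossover $\price(\rev)=\fmv_\P$, i.e. $\rev_\P=\priceInv(\fmv_\P)$, and the strict monotonicity on each side pins this optimum down uniquely.

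The step I expect to be the real obstacle is the boundary itself, $\price(\rev)=\fmv_\P$, especially for a player whose private hodl-value $\val_\P$ happens to be large. One must fix a tie-breaking convention for the marginal would-be buyer when the posted price equals the market value, and then verify that the strictly positive, non-reimbursed fee makes $\price(\rev)>\fmv_\P$ \emph{strictly} worse than $\price(\rev)=\fmv_\P$: this is immediate when $\val_\P\le\fmv_\P$, but when $\val_\P>\fmv_\P$ it needs either a mild assumption that the royalty $\fee(\priceInv(\fmv_\P))$ outweighs the surplus $\val_\P-\fmv_\P$, or the observation that such a player keeps the token under every disclosure at or above $\priceInv(\fmv_\P)$ and hence gains nothing by inflating the report (so the royalty-relevant report is still $\priceInv(\fmv_\P)$). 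A lesser, purely technical point is legitimising the ``a buyer shows up iff $\price(\rev)\le\fmv_\P$'' reduction instead of unwinding the fresh instance of the mechanism that a successful auto-sale spawns; because \Cref{rul:autosale} is modelled as one discrete turn, it is enough to treat it as a single posted-price offer tested against the free-market value that \Cref{dfn:valuation} already takes as given.
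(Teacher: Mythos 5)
Your proof is correct and follows essentially the same route as the paper: the same piecewise decomposition into a resell branch $\price(\rev)$ (strictly increasing) and a keep branch $\val_\P-\fee(\rev)$ (strictly decreasing), with the two opposing monotonicities pinning the optimum at the crossover $\rev_\P=\priceInv(\fmv_\P)$. The boundary issue you flag at the end is handled in the paper simply by stating the optimum as a supremum attained in the limit $\rev\to\rev_\P$ (from below when reselling, from above when keeping), which sidesteps the tie-breaking question entirely.
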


\begin{proof}
    Player $\P$ will disclose $x$ if they intend to either keep the token long-term, or sell it again in the auto-sale window to exploit an arbitrage between different estimations of the free market value.
    Recalling that an auto-sale triggers a reimbursement of the fee, it results in quasilinear utility
    \begin{align*}
        u_\text{resell}(\rev) := \price(\rev) - c
        .
        \numberthis
    \end{align*}
    Conversely, keeping the token for some time exploits the token's future value at the cost of paying a fee now, yielding utility
    \begin{align*}
        u_\text{keep}(\rev) := \val_\P - \fee(\rev) - c.
        \numberthis
    \end{align*}
    The aggregate utility expected by player $\P$ can thus be expressed as a piecewise function dependent on their belief of the token's fair-market value
    \begin{align*}
        U_\P(\rev) :=
        \begin{cases}
            u_\text{resell}(\rev) & \pi(x) < \fmv_\P \\
            u_\text{keep}(\rev)   & \pi(x) > \fmv_\P
        \end{cases}
        \numberthis
        .
    \end{align*}
    Using the strict monotonicity of $\fee$ and $\price$ as given in \Cref{dfn:fees,dfn:price}, one finds the supremum of the utility at the crossover point $\rev_\P := \priceInv(\fmv_\P)$ between the two branches of the function as
    \begin{align*}
        U_\P^* := \sup_\rev  U_\P(\rev)
         & = \max \left( u_\text{resell}(\rev_\P), u_\text{keep}(\rev_\P) \right) \\
         & = \max \left( \price(\rev_\P),  \val_\P - \fee(\rev_\P) \right) - c    \\
         & = \max \left( \fmv_\P,  \val_\P - \fee(\rev_\P) \right) - c
        \numberthis
        .
        \label{eq: supremum utility}
    \end{align*}
    In both situations the player obtains maximum utility in the limit $\rev \rightarrow \rev_\P$. In the resale scenario, $\rev$ approaches from below and maximises sale revenue, whereas in the token-keeping scenario it approaches from above and minimises the fee.
\end{proof}

\begin{rem}
    For an alternate framing of the proof of \Cref{lem:truevalue}, compare the utility of misreporting $\rev' \neq \rev_\P$ against $U_\P^*$. By definition, underreporting $\rev' < \rev_\P$ results in the player believing that a resale will occur, and the player obtains
    \begin{align*}
        U_\P(\rev' < \rev_\P)  = \price(\rev') - c
        < \price(\rev_\P) - c
        \leq U_\P^*
        \numberthis
        \label{eq: utility underreporting}
        .
    \end{align*}
    Conversely, overreporting $\rev' > \rev_\P$ demonstrates that the player wishes to keep the token, yielding
    \begin{align*}
        U_\P(\rev' > \rev_\P)  = \val_\P - \fee(\rev') - c
        <  \val_\P - \fee(\rev_\P) - c
        \leq U_\P^*
        \numberthis
        \label{eq: utility overreporting}
        .
    \end{align*}
\end{rem}

\begin{rem}
    The result in \Cref{lem:truevalue} demonstrates why the move away from royalty payments occurred in the first place.
    Participation when wishing to keep the token results in $v_\P - \fee(\rev_\P)$ instead of $v_\P$, providing an incentive to avoid $\fee$.
\end{rem}

\begin{thm}[Best-response incentive-compatibility]
    \label{thm:incentive-compat}
    Upon transfer of a token from an address controlled by player $\Q$ to one controlled by player $\P$, there is a single rationalisable solution under best responses, such that $\P$ truthfully reveals whether or not ${\P \qeq \Q}$ and, if they differ, truthfully reveals their valuation $\fmv_\P$ by dislosing $x$ in the close neighborhood of $\priceInv(\fmv_\P)$ (see also \Cref{fig:game}).
\end{thm}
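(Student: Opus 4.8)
The argument is a case analysis on whether the transfer changes the controlling player, assembled from the preceding corollaries and lemmas, with a final uniqueness step that unwinds the game governed by \Cref{rul:first,rul:takeback,rul:autosale}. Suppose first $\P \neq \Q$. By \Cref{cor:reveal}, declining to disclose at \Cref{rul:first} invites $\Q$ to reclaim the token under \Cref{rul:takeback}, leaving $\P$ with utility $-\cost$; this is strictly dominated by disclosing optimally, since \Cref{lem:truevalue} secures $\P$ at least $\fmv_\P - \cost$ with $\fmv_\P > 0$ (by \Cref{dfn:price}, $\price > 0$, so $\rev_\P := \priceInv(\fmv_\P)$ forces $\fmv_\P > 0$). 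Hence every best response discloses some $\rev$ and pays $\fee(\rev)$. Conditional on disclosing, \Cref{lem:truevalue} identifies the optimum: the supremum of $U_\P$ is approached only as $\rev \to \priceInv(\fmv_\P)$, and the over- and under-reporting estimates in the remark following \Cref{lem:truevalue} make every $\rev' \neq \priceInv(\fmv_\P)$ strictly inferior, so $\P$'s best response is pinned to a disclosure arbitrarily close to $\priceInv(\fmv_\P)$ (one-sided from below if a resale is anticipated, from above otherwise). Since $\price$ is strictly monotone this is a truthful revelation of $\fmv_\P$, and by \Cref{cor:payiff} the act of disclosing certifies $\P \neq \Q$; together this reproduces \Cref{des:revelation}.

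Now suppose $\P \qeq \Q$. By \Cref{lem:hplayer} every address in $H$ lies in $A_\P$, so \Cref{rul:takeback} can only be invoked by an address $\P$ already controls and poses no threat---$\P$ is never coerced into disclosing. By \Cref{rem:self-transfer} the move leaves $\P$'s utility unchanged with $\cost = 0$, so staying silent secures $\val_\P$, whereas disclosing any $\rev$ pays the strictly positive fee $\fee(\rev) > 0$ and, by the $\cost = 0$ specialisation of the utility in \Cref{lem:truevalue}, yields $\max\!\left(\fmv_\P,\ \val_\P - \fee(\priceInv(\fmv_\P))\right)$, which does not exceed $\val_\P$ for a player self-transferring for the non-economic reasons envisaged in \Cref{rem:self-transfer} (so $\fmv_\P \le \val_\P$). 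Hence the best response is silence, which by \Cref{cor:payiff} certifies $\P \qeq \Q$ and reproduces \Cref{des:self-transfer}.

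Finally, for the ``single rationalisable solution'' claim: \Cref{rul:first} is a one-shot choice for $\P$, and the take-back and auto-sale opportunities of \Cref{rul:takeback,rul:autosale} are discrete turns, so $\P$'s decision is settled by a finite sequence of moves in which, by the above, the best response is everywhere uniquely determined---up only to the neighbourhood of $\priceInv(\fmv_\P)$ forced by the non-attained supremum. Hence $\P$ disclosing precisely when $\P \neq \Q$, with value arbitrarily close to $\priceInv(\fmv_\P)$ when so, is the unique rationalisable solution, as depicted in \Cref{fig:game}. I expect the main obstacle to be the $\P \qeq \Q$ branch---ruling out a \emph{voluntary} disclosure, since a player who secretly intends to resell could reveal to trigger an auto-sale at $\price(\rev) \approx \fmv_\P$ and profit when $\fmv_\P > \val_\P$. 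The clean resolution is the modelling stance of \Cref{rem:self-transfer}: a self-transfer carries no change in utility, hence no fresh sale intent, so that option was equally available beforehand and cannot tip the incentives. The only remaining point is technical---the supremum in \Cref{lem:truevalue} is never attained---which is exactly why the statement speaks of disclosure in a \emph{close neighbourhood} of $\priceInv(\fmv_\P)$ rather than at the point itself.
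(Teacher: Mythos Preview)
Your proposal is correct and follows essentially the same approach as the paper's own proof, assembling \Cref{cor:reveal}, \Cref{cor:payiff} (via \Cref{lem:hplayer}), and \Cref{lem:truevalue} into the conclusion. Your treatment is considerably more detailed---in particular the explicit $\P = \Q$ branch and the uniqueness discussion---whereas the paper simply cites these auxiliary results in two sentences; your self-flagged caveat about voluntary disclosure when $\fmv_\P > \val_\P$ in a self-transfer is a genuine edge case the paper's terse proof does not address, and your resolution via \Cref{rem:self-transfer} is the right one.
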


\begin{proof}
    \Cref{cor:payiff,cor:reveal} of \Cref{lem:hplayer} prove the truthful revelation of ${\P \qeq \Q}$ and, if the players differ, the disclosure of \emph{some} value $\rev$.
    \Cref{lem:truevalue} proves that the player obtains maximum utility in the limit $\rev \rightarrow \priceInv(\fmv_\P)$, making disclosure of $\rev$ in the close neighborhood of $\priceInv(\fmv_\P)$ the best response.
\end{proof}

\begin{figure}
    \includegraphics[width=\textwidth, trim={0 3.4in 0 0}, clip]{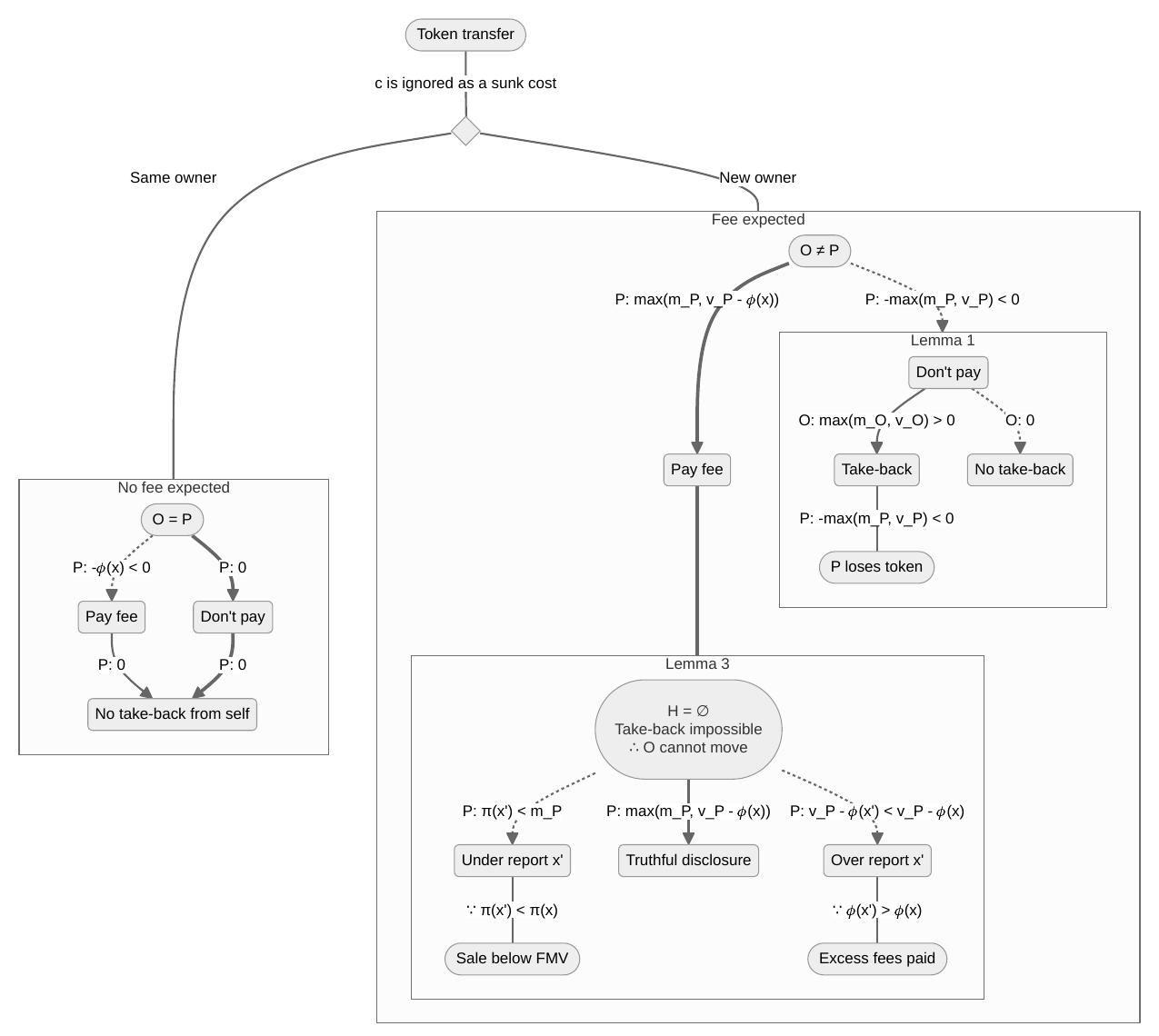}
    \caption{
        \label{fig:game}
        Flowchart representation of \Cref{thm:incentive-compat}.
        Under best responses, after a token is transferred between addresses, the player controlling the receiving address pays a royalty fee iff it wasn't a self-transfer.
        The royalty is computed on the player's truthfully revealed estimate of fair-market value. \\
        Lines with arrows depict hypothetical moves while those without them constitute passive flow due to branching scenarios or automatic consequences. Solid lines represent best responses from their originating state and bold ones follow the path of best responses to a rationalisable solution. \\
        Moves are annotated with the player ($\O$ or $\P$) and their best possible outcome, having commited to the move, but from then on following a best-response strategy.
    }
\end{figure}

\begin{dfn}[Collusion]
    A collusion is a contractual arrangement between two or more players to avoid the truthful disclosure of $\rev_\P$ and the full payment of $\fee(\rev_\P)$.
    This can only be achieved if all players $\{\Q_i | A_{\Q_i} \in H \land \Q_i \neq \P\}$ agree, in exchange for bribes $\bribe_i$ from $\P$, to not invoke the takeback under \Cref{rul:takeback}.
    For practical reasons we assume that the players $\Q_i$ each agree to lock up collateral $\coll_i$ for time $T$, which is transferred to $\P$ if $\Q_i$ breaks the contract---other colluding players have their collateral returned.
    The collateral is locked up at the same time as receiving the bribe so collusion can be equivalently characterised as $\Q_i$ investing $\coll_i - \bribe_i$ for an investment return of $\bribe_i$.
\end{dfn}

\begin{rem}
    \label{rem:collude-keep}
    A token owner only colludes to avoid the fee if they wish to keep the token, implying that $\val_\P \ge \fmv_\P$.
\end{rem}

\newcommand{\bribeSum}[1]{\sum_{#1=1}^N \bribe_{#1}}

\begin{lem}
    A collusion between rational players is only feasible if the following conditions are met:
    \begin{align}
        \fee(\rev_\P) & > \sum_{i=1}^N \bribe_i \label{eq: total bribe}                      \\
        \bribe_i      & > \roi \coll_i \label{eq: lower bribe}                               \\
        \coll_i       & > \val_\P + \bribeSum{j} \quad\forall i \label{eq: lower collateral}
        ,
    \end{align}
    where $\roi := 1 - e^{-RT}$ and $R > 0$ is the continuously compounding rate of return that can be earned by using the collateral for other purposes.
    $R$ is assumed to be equal for all players and independent of the amount of funds invested.
\end{lem}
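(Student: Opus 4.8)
The plan is to read the three inequalities as three independent necessary conditions, each arising from the rationality of a different stakeholder in the arrangement, and to obtain each by comparing a colluding party's payoff to its best outside option. Throughout I will use \Cref{rem:collude-keep} (a colluding $\P$ wants to keep the token, so $\val_\P \ge \fmv_\P$), the utilities derived in \Cref{lem:truevalue} and in particular \eqref{eq: supremum utility}, and the modelling conventions attached to the definition of collusion: each $\Q_i$ simultaneously receives $\bribe_i$ and locks collateral $\coll_i$ for time $T$, is returned $\coll_i$ at time $T$ unless it reneges (in which case $\coll_i$ passes to $\P$), and any funds not tied up compound continuously at rate $R$. The cost $\cost$ is sunk at the moment of receipt and cancels from every comparison below.

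First, \eqref{eq: total bribe} is the condition under which $\P$ wants to collude at all. In the collusion no address of $H$ invokes the take-back, so by \Cref{lem:takeback} (and the reasoning behind \Cref{cor:reveal}) $\P$ may safely disclose nothing, trigger no auto-sale, pay no fee, and keep the token, for net utility $\val_\P - \bribeSum{i} - \cost$. Declining to collude, $\P$'s best response instead yields $U_\P^* = \max\!\left(\fmv_\P,\ \val_\P - \fee(\rev_\P)\right) - \cost$ from \eqref{eq: supremum utility}. Demanding that the former strictly exceed the latter gives $\fee(\rev_\P) > \bribeSum{i}$ in the keeping branch directly, and in the resale branch because $\val_\P - \bribeSum{i} > \fmv_\P$ forces $\bribeSum{i} < \val_\P - \fmv_\P \le \fee(\rev_\P)$.

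Second, \eqref{eq: lower bribe} is each colluder's participation constraint. A player $\Q_i$ who joins ties up, in net terms, $\coll_i - \bribe_i$ of capital over $[0,T]$ and is returned $\coll_i$ at time $T$; by declining, that capital stays free and compounds at rate $R$. Joining is therefore rational only if $\coll_i > (\coll_i - \bribe_i)e^{RT}$, which rearranges, with $\roi = 1 - e^{-RT}$, to $\bribe_i > \roi\coll_i$; i.e. the bribe must at least cover the opportunity cost $\roi\coll_i$ of freezing the collateral.

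Third, \eqref{eq: lower collateral} is the self-enforcement (no-deviation) constraint, and this is the step I expect to be the main obstacle. The collusion holds only if no $\Q_i$ profits by reneging, i.e. by invoking \Cref{rul:takeback} to seize the token at the price of forfeiting $\coll_i$ to $\P$. The work lies in lower-bounding $\Q_i$'s payoff from this deviation, since it is governed by the off-path continuation: the contract voids, the remaining colluders recover both their collateral and their take-back option, and the mechanism restarts at \Cref{rul:first} with $\Q_i$ now the owner. I would argue that $\Q_i$ can realise at least $\val_\P$ from the seized token --- because $\P$, who by \Cref{rem:collude-keep} values hodling it at $\val_\P \ge \fmv_\P$, is willing to pay up to $\val_\P$ to reacquire it --- while retaining the bribe it was paid and, in the adversarial worst case that a necessary condition must accommodate, the entire committed bribe pool $\bribeSum{j}$ that $\P$ can no longer recover once the arrangement collapses. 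A deviation by $\Q_i$ is then profitable whenever $\coll_i \le \val_\P + \bribeSum{j}$, so feasibility of the collusion requires $\coll_i > \val_\P + \bribeSum{j}$ for every $i$. The delicate point is justifying that the relevant quantity is the full pool $\bribeSum{j}$ rather than merely $\bribe_i$ --- i.e. that in the worst-case continuation a single reneging colluder can hold the whole scheme to ransom --- whereas conditions \eqref{eq: total bribe} and \eqref{eq: lower bribe} are, by comparison, one-line participation comparisons.
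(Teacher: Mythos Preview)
Your derivations of \eqref{eq: total bribe} and \eqref{eq: lower bribe} are essentially the paper's, only more explicit: you compare $\P$'s colluding utility to $U_\P^*$ in both branches, and your terminal-value inequality $\coll_i > (\coll_i - \bribe_i)e^{RT}$ is algebraically the same as the paper's $\bribe_i > (\coll_i - \bribe_i)(e^{RT}-1)$.

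Where you diverge is \eqref{eq: lower collateral}. You frame it as $\Q_i$'s no-deviation constraint and then try to argue that a reneging $\Q_i$ can capture not just the token but the \emph{entire} bribe pool $\bribeSum{j}$. That step does not go through: the bribes $\bribe_j$ for $j\neq i$ were paid to the other colluders, who keep them (and have their collateral returned) when the arrangement voids. A deviating $\Q_i$ has no claim on those funds, so your lower bound on its deviation payoff is overstated, and the inequality you want does not fall out of this comparison. You flag this yourself as ``the delicate point,'' and indeed it is where the argument breaks.

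The paper sidesteps this entirely by reading \eqref{eq: lower collateral} from $\P$'s side rather than $\Q_i$'s: it is $\P$'s \emph{insurance} requirement, not $\Q_i$'s incentive constraint. If $\Q_i$ breaches, $\P$ loses the token (valued at $\val_\P$ per \Cref{rem:collude-keep}) and has already sunk the full bribe outlay $\bribeSum{j}$, while receiving $\coll_i$ in compensation. A rational $\P$ therefore demands $\coll_i > \val_\P + \bribeSum{j}$ for every $i$ before entering the arrangement. Viewed this way the full sum appears immediately and without the off-path bargaining story you were reaching for.
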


\begin{proof}
    \Cref{eq: total bribe} is a direct consequence of the rationality of the current owner of the token.
    If the total bribe would exceed the fee, the owner $\P$ would simply reveal $\rev_\P$ and pay the fee.

    The previous owners of the token $\Q_i$ would only accept a collusion if their bribes $\beta_i$ exceed any potential returns from investing the difference in the locked up collateral, which can be expressed as
    \begin{align}
        \bribe_i        & > (\coll_i - \bribe_i) \left( e^{RT} - 1\right)     \\
        \bribe_i e^{RT} & > \coll_i \left( e^{RT} - 1\right)                  \\
        \bribe_i        & > \coll_i \left( 1 - e^{-RT} \right) = \roi \coll_i
        .
    \end{align}

    \Cref{eq: lower collateral} follows from the desire of the current owner to protect themselves in the case of a breach of contract, where their token gets taken back.
    In this case, the player wants to be compensated for at least the loss of the token, which the owner values at $\val_\P$ (per \Cref{rem:collude-keep}), plus the total bribe spent to incentivise all $\O_i$.
\end{proof}

\begin{thm}[Collusion limit]
    \label{thm: collusion limit}
    The number of colluding players and their collusion time are limited by
    \begin{align}
        T  N < \frac{\fee(\rev_\P)}{R  \val_\P}
    \end{align}
    if $R  T \ll 1$.
\end{thm}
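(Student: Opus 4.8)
The plan is to collapse the three feasibility conditions of the preceding lemma into a single inequality relating $N$, $\roi$, $\val_\P$ and $\fee(\rev_\P)$, and then to linearise $\roi = 1 - e^{-RT}$ in the regime $RT \ll 1$ to recover the stated form.

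First I would eliminate the collateral: substituting the collateral bound $\eqref{eq: lower collateral}$ into the bribe bound $\eqref{eq: lower bribe}$ (legitimate since $\roi > 0$) gives, for every $i \in \{1,\dots,N\}$,
\[
    \bribe_i > \roi\Bigl(\val_\P + \sum_{j=1}^N \bribe_j\Bigr).
\]
Summing over $i$ and writing $B := \sum_{i=1}^N \bribe_i$ yields $B > N\roi(\val_\P + B)$, and discarding the non-negative term $N\roi B$ leaves $B > N\roi\,\val_\P$. Combining this with the owner-rationality bound $\eqref{eq: total bribe}$, $\fee(\rev_\P) > B$, and dividing through by $\val_\P$ produces
\[
    N\roi < \frac{\fee(\rev_\P)}{\val_\P}.
\]

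It then remains to trade the exact factor $\roi = 1 - e^{-RT}$ for $RT$. Since $1 - e^{-x} = x - \tfrac{x^2}{2} + O(x^3)$, in the stated regime $RT \ll 1$ one has $\roi = RT\bigl(1 + O(RT)\bigr)$, so the inequality above reads $N\,RT < \fee(\rev_\P)/\val_\P$ to leading order, which rearranges to the claimed $TN < \fee(\rev_\P)/(R\,\val_\P)$.

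I expect the only delicate step to be this final linearisation, since the resulting bound is asymptotic rather than exact: one must be explicit that the $O\bigl((RT)^2\bigr)$ correction is being dropped, and the direction of the error is controlled because $RT - \tfrac{(RT)^2}{2} < \roi < RT$ for $RT > 0$, so $\roi \approx RT$ is tight from both sides. If a fully non-asymptotic statement were wanted one could instead retain the term $N\roi B$ above, sharpening the intermediate conclusion to $N\roi < \fee(\rev_\P)/(\val_\P + \fee(\rev_\P))$ (which also forces $N\roi < 1$, so the chain is self-consistent); everything preceding the linearisation is a routine sequence of substitutions.
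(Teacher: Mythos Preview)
Your proposal is correct and follows essentially the same route as the paper: chain the three feasibility inequalities to obtain $\fee(\rev_\P) > B > N\roi(\val_\P + B) > N\roi\,\val_\P$, then linearise $\roi = 1 - e^{-RT} \approx RT$ for $RT \ll 1$. Your additional remarks on the direction of the linearisation error and the possible sharpening via the retained $N\roi B$ term go slightly beyond what the paper records, but the core argument is identical.
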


\begin{proof}
    Using \Cref{eq: total bribe,eq: lower bribe,eq: lower collateral} to eliminate $\bribe_i$ and $\coll_i$, we get
    \begin{align}
        \fee(\rev_\P) >
        \bribeSum{i}
        > \roi  \sum_{i=1}^N \coll_i >
        \roi  N \left( \val_\P + \bribeSum{j} \right) >
        \roi  N  \val_\P
        .
    \end{align}
    Since $e^{-RT} = 1 - RT + \mathcal{O}((RT)^2)$, for small $R  T \ll 1$, we can approximate $\roi = 1 - e^{-RT} \approx R  T$ such that
    \begin{align}
        \fee(\rev_\P) > RTN\val_\P, \numberthis
    \end{align}
    and rearranging the terms yields the desired result.
\end{proof}

\begin{rem}
    Henceforth, when referring to a linear fee function we mean $\fee(\rev) = \rho \rev$ with royalty share $\rho \in (0,1)$.
\end{rem}

\begin{cor}
    Using a linear fee function and the identity as price function, the collusion limit in \Cref{thm: collusion limit} becomes independent of token valuation
    \begin{align}
        T  N < \frac{\fee(\rev_\P)}{R  \val_\P} = \frac{\rho  \fmv_\P}{R  \val_\P} \leq \frac{\rho}{R}
        .
    \end{align}
\end{cor}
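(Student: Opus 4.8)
The plan is to obtain the statement by directly specialising \Cref{thm: collusion limit} to the two concrete choices. First I would pin down what those choices do to the quantities appearing in the bound: with the price function taken to be the identity, $\priceInv$ is the identity as well, so the truthfully disclosed value from \Cref{lem:truevalue} reduces to $\rev_\P = \priceInv(\fmv_\P) = \fmv_\P$; and with the linear fee $\fee(\rev) = \rho\rev$ this gives $\fee(\rev_\P) = \rho\fmv_\P$, which is precisely the middle equality of the claim. I would also check in passing that these choices are admissible under \Cref{dfn:fees,dfn:price}: the identity is positive and strictly increasing on the relevant (positive) domain, and $\fee(\rev) = \rho\rev$ is positive, strictly increasing, and has best Lipschitz constant $L_\fee = \rho$, which is $< 1$ exactly because $\rho \in (0,1)$.

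Next I would quote \Cref{thm: collusion limit} unchanged to get the strict inequality $TN < \fee(\rev_\P)/(R\val_\P)$, valid in the stated regime $RT \ll 1$, and then substitute the identity just derived to rewrite the right-hand side as $\rho\fmv_\P/(R\val_\P)$.

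Finally, for the last inequality I would invoke \Cref{rem:collude-keep}: a player only colludes to avoid the fee when they intend to keep the token, which forces $\val_\P \ge \fmv_\P$ and hence $\fmv_\P/\val_\P \le 1$; multiplying through by $\rho/R > 0$ then yields $\rho\fmv_\P/(R\val_\P) \le \rho/R$, closing the chain.

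I do not expect a genuine obstacle here — the result is essentially a one-line corollary of \Cref{thm: collusion limit}. The only subtlety worth making explicit is that the concluding bound $\rho/R$ depends on $\fmv_\P \le \val_\P$, which does not hold for an arbitrary owner but is guaranteed within the collusion setting by \Cref{rem:collude-keep}; I would therefore note up front that the corollary is implicitly scoped to that setting, so that the step replacing $\fmv_\P$ by $\val_\P$ is legitimate.
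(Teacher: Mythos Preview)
Your proposal is correct and matches the paper's own (implicit) argument: the corollary is stated without a separate proof, and the displayed chain $TN < \fee(\rev_\P)/(R\val_\P) = \rho\fmv_\P/(R\val_\P) \le \rho/R$ encodes exactly the three steps you describe---\Cref{thm: collusion limit}, the specialisation $\rev_\P = \fmv_\P$ and $\fee(\rev_\P)=\rho\fmv_\P$, and the bound $\fmv_\P \le \val_\P$ from \Cref{rem:collude-keep}. Your added check that the linear fee and identity price satisfy \Cref{dfn:fees,dfn:price} and your explicit remark on the scoping of $\fmv_\P \le \val_\P$ to the collusion setting are welcome clarifications that the paper leaves tacit.
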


\begin{rem}
    On Ethereum, the interest rate $R$ is at least the one given by the PoS consensus mechanism.
    For example, using the Lido staking pool, locked up capital could generate a return of $R = \sim 3.5 \%  \text{yr}^{-1}$ \cite{LidoStakingAPR}.
    With a royalty fraction of $\rho = \sim 3.5 \%$, it would thus only be feasible to collude with one person over a year, or with twelve people over a month.
\end{rem}

\newcommand{\feeForPrice}[1]{\fee(\priceInv(#1))}

\begin{lem}[Lower bound on $\fmv_\P$ after a sale]
    \label{lem:fmv-lower-bound}
    If $\Q$ sells a token to $\P$ for a price of $\cost$, the token's free-market value is bounded such that
    \begin{align}
        \label{eq:fmv-lower-bound}
        \fmv_\P \ge \cost - \feeForPrice{\cost}
        .
    \end{align}
\end{lem}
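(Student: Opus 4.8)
The plan is to read the bound off the individual rationality of $\P$'s purchase together with \Cref{thm:incentive-compat}. After paying $\cost$ for the token, player $\P$ is exactly in the situation analysed in \Cref{lem:truevalue}: the transaction being a genuine sale, $\P \neq \Q$, so by \Cref{cor:reveal,thm:incentive-compat} the new owner best-responds by disclosing $\rev_\P = \priceInv(\fmv_\P)$ and paying the royalty $\fee(\rev_\P) = \feeForPrice{\fmv_\P}$. Since the transfer cost incurred by $\P$ is precisely $\cost$, the net utility $\P$ extracts from the whole episode is the supremum computed in \Cref{lem:truevalue}, namely $U_\P^* = \max\!\left(\fmv_\P,\, \val_\P - \feeForPrice{\fmv_\P}\right) - \cost$.

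The key steps are then short. Declining to buy would have left $\P$ with utility $0$, so rationality forces $U_\P^* \ge 0$, i.e.\ $\max\!\left(\fmv_\P,\, \val_\P - \feeForPrice{\fmv_\P}\right) \ge \cost$. I would case-split on which branch attains the maximum. If it is the resale branch, then $\fmv_\P \ge \cost$, and since $\feeForPrice{\cdot}$ is strictly positive (\Cref{dfn:fees}) this already yields the slightly stronger $\fmv_\P \ge \cost > \cost - \feeForPrice{\cost}$. The same positivity settles, trivially, any case with $\cost \le \fmv_\P$. For the remaining steps I would also record that $\feeForPrice{\cdot} = \fee \circ \priceInv$ is strictly increasing (immediate from \Cref{dfn:fees,dfn:price}), so that $\feeForPrice{\fmv_\P}$ and $\feeForPrice{\cost}$ may be interchanged once the sign of $\cost - \fmv_\P$ is known.

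The genuine obstacle is the token-keeping branch with $\cost > \fmv_\P$: there $U_\P^* \ge 0$ only gives $\val_\P - \feeForPrice{\fmv_\P} \ge \cost$, which bounds $\val_\P$ rather than $\fmv_\P$. To close this case I would argue that a rational buyer who intends to keep the token would not pay a price exceeding what can be immediately recovered from it --- its free-market value $\fmv_\P$ plus the royalty $\feeForPrice{\fmv_\P}$ that \Cref{rul:autosale} reimburses upon an auto-sale --- i.e.\ $\cost \le \fmv_\P + \feeForPrice{\fmv_\P}$; together with $\cost > \fmv_\P$ and monotonicity of $\feeForPrice{\cdot}$ this finishes the proof, since $\fmv_\P \ge \cost - \feeForPrice{\fmv_\P} > \cost - \feeForPrice{\cost}$. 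Making this ``no-overpayment'' comparison watertight --- identifying precisely the counterfactual against which it is made, and hence whether it leans on the auto-sale reimbursement of \Cref{rul:autosale}, on the seller $\Q$'s own rationality in setting $\cost$, or on an explicit modelling assumption --- is the part I expect to demand the most care.
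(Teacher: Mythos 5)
There is a genuine gap, and it sits exactly where you flagged it: the token-keeping branch. Your proposed patch --- that a rational buyer intending to keep the token ``would not pay a price exceeding what can be immediately recovered from it,'' i.e.\ $\cost \le \fmv_\P + \feeForPrice{\fmv_\P}$ --- is false in general. By \Cref{dfn:valuation} the whole point of the keeping case is that $\val_\P$ may strictly exceed $\fmv_\P$ (cf.\ \Cref{rem:collude-keep}); a buyer with a large private valuation rationally pays any $\cost \le \val_\P - \feeForPrice{\fmv_\P}$, which can be arbitrarily far above $\fmv_\P + \feeForPrice{\fmv_\P}$. Individual rationality of $\P$ therefore only ever bounds $\val_\P$ in this branch, never $\fmv_\P$, and no counterfactual about $\P$'s own purchase decision can close it. More fundamentally, $\fmv_\P$ is defined as the price at which a buyer \emph{can be found}, so the bound has to come from exhibiting a buyer, not from $\P$'s willingness to pay.

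The paper's proof supplies exactly that missing buyer: the original seller $\Q$. If $\P$ discloses $\rev_\P$, then under \Cref{rul:autosale} $\Q$ may repurchase at $\price(\rev_\P)$, pay $\feeForPrice{\cost}$ to neutralise any take-back, and stand ready to resell at $\cost$ as before, for a net gain of $\cost - \price(\rev_\P) - \feeForPrice{\cost}$. Whenever $\price(\rev_\P) < \cost - \feeForPrice{\cost}$ this is a strictly profitable arbitrage, so a buyer is \emph{guaranteed} to exist at every price below $\cost - \feeForPrice{\cost}$; by \Cref{dfn:valuation} this pins $\fmv_\P \ge \cost - \feeForPrice{\cost}$ directly, independent of $\P$'s intentions, of $\val_\P$, and of any individual-rationality assumption on the purchase. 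Your resale-branch observation ($\fmv_\P \ge \cost$ when the resale branch attains the maximum) is correct but is subsumed by, and weaker in applicability than, this one-line arbitrage argument.
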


\begin{rem}
    By definition, paying a fee of $\feeForPrice{\rev}$ results in an auto-sale listing price of $\rev$.
\end{rem}

\begin{proof}[Proof of \Cref{lem:fmv-lower-bound}]
    $\Q$ originally owned the token and was willing to sell it in return for $\cost$.
    $\P$ paid $\cost$ to $\Q$ prior to disclosing some $\rev_\P$.
    If $\Q$ then accepts the auto-sale price by paying $\price(\rev_\P)$, and also pays $\feeForPrice{\cost}$ to invalidate a take-back, then $\O$ is back in their original state of token ownership and a potential sale for $\cost$, but with a change in funds of ${\cost - \price(\rev_\P) - \feeForPrice{\cost}}$.
    If positive, this change constitutes an arbitrage in $\O$'s favour.
    As a guaranteed buyer can be found for $\P$'s auto-sale, we have established a lower bound on $\fmv_\P$ at $\O$'s arbitrage threshold of $\cost - \feeForPrice{\cost}$.
\end{proof}

\begin{thm}[Royalty avoidance bound]
    \label{thm:avoidance-bound}
    With the identity as price function, the difference between the fee paid after a sale $\fee(\cost')$ and the idealised sales tax $\fee(\cost)$ is bounded by a factor of $L_\fee$. For a linear fee function, fee avoidance is capped at $\rho^2 \cost$.
\end{thm}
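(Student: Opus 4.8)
The plan is to chain the lower bound of \Cref{lem:fmv-lower-bound} with the truthful-disclosure guarantee of \Cref{thm:incentive-compat}, and then absorb the remaining gap using the Lipschitz property of $\fee$ from \Cref{dfn:fees}. No fresh mechanism-design reasoning is needed: everything is squeezed out of results already established.

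First I would fix notation. With the identity as price function $\priceInv$ is also the identity, so \Cref{lem:fmv-lower-bound} reads $\fmv_\P \ge \cost - \fee(\cost)$. By \Cref{thm:incentive-compat}, after the sale the buyer $\P$ discloses $\rev_\P$ in the close neighborhood of $\priceInv(\fmv_\P) = \fmv_\P$ and pays the corresponding royalty; taking the supremum case $\rev_\P = \fmv_\P$ (a perturbation within the neighborhood moves the fee by an arbitrarily small amount, and in the keep-scenario only upwards, which cannot increase avoidance), the royalty actually collected is $\fee(\cost')$ with $\cost' := \fmv_\P$. This is the quantity to compare against the idealised tax $\fee(\cost)$ levied on the true transaction price.

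Next I would bound the avoided royalty $\fee(\cost) - \fee(\cost')$. If $\cost' \ge \cost$ there is nothing to avoid, since strict monotonicity of $\fee$ gives $\fee(\cost') \ge \fee(\cost)$ and the difference is non-positive; otherwise $\cost - \fee(\cost) \le \cost' < \cost$, and monotonicity together with the lower bound gives $\fee\bigl(\cost - \fee(\cost)\bigr) \le \fee(\cost')$, hence
\[
    \fee(\cost) - \fee(\cost') \;\le\; \fee(\cost) - \fee\bigl(\cost - \fee(\cost)\bigr) \;\le\; L_\fee\,\fee(\cost),
\]
where the last inequality is the defining Lipschitz bound applied to two arguments that differ by exactly $\fee(\cost) > 0$. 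Since $L_\fee < 1$ by \Cref{dfn:fees}, this is a genuine bound: at most a fraction $L_\fee$ of the ideal royalty can be escaped. Specialising to the linear fee $\fee(\rev) = \rho\rev$, whose best Lipschitz constant is exactly $\rho \in (0,1)$, the bound becomes $\rho\,\fee(\cost) = \rho^2\cost$.

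I expect the only delicate points to be the case split on the sign of $\cost - \cost'$ — separating the trivial overpayment regime $\fmv_\P \ge \cost$ from genuine underpayment — and checking that the argument $\cost - \fee(\cost)$, to which the Lipschitz estimate is applied, lies in the domain on which $\fee$ is defined and positive, both immediate from \Cref{dfn:reveal,dfn:fees}. One should also note explicitly that in equilibrium no auto-sale is triggered, so the collected $\fee(\cost')$ genuinely sticks rather than being reimbursed under \Cref{rul:autosale}; this follows from the very indifference threshold that drives \Cref{lem:fmv-lower-bound}. All substantive content is inherited from \Cref{lem:fmv-lower-bound,thm:incentive-compat}.
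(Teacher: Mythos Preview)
Your proposal is correct and follows essentially the same route as the paper's proof: identify $\cost'$ with the lower bound $\cost - \fee(\cost)$ from \Cref{lem:fmv-lower-bound} (using the identity price function), then apply the Lipschitz inequality to the arguments $\cost$ and $\cost - \fee(\cost)$, and finally specialise to the linear case. You add a little extra care --- the explicit case split on $\cost' \gtrless \cost$, the monotonicity step from $\cost' \ge \cost - \fee(\cost)$ to $\fee(\cost') \ge \fee(\cost - \fee(\cost))$, and the appeal to \Cref{thm:incentive-compat} to justify $\cost' = \fmv_\P$ --- but the core argument is identical.
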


\begin{proof}
    By definition, the best Lipschitz constant of $\fee$, $L_\fee$, limits the absolute difference in fees such that
    \begin{align}
        \abs{\fee(\cost) - \fee(\cost')} \le L_\fee \abs{\cost - \cost'}
        .
    \end{align}
    Substituting the lower bound on $\fmv_\P$ established in \Cref{lem:fmv-lower-bound} for $\cost'$,
    \begin{align}
        \abs{\fee(\cost) - \fee(\cost - \fee(\cost))} & \le L_\fee \abs{\cost - (\cost - \feeForPrice{\cost})} \\
                                                      & = L_\fee \abs{\cost - (\cost - \fee(\cost))}           \\
                                                      & = L_\fee \fee(\cost)
        .
    \end{align}
    For a linear fee function $\fee(\cost) = \rho \cost$ the Lipschitz constant is simply $\rho$, and $L_\fee \fee(\cost) = \rho^2 \cost$.
\end{proof}

\section{Discussion}

\subsection{Implementation, adoption, and user experience}

Although NFT traders are early adopters of novel technology, there is an established status quo that adoption of \erc{7526} would disrupt.
Widespread acceptance will therefore be dependent on careful change management and communication to end users.

\subsubsection{Characterisation as an auction}

Our early discussions with NFT traders revealed some unease with the proposed mechanism due to two main reasons: the perceived additional complexity around self-assessed token valuation and the concept of eventual ownership, stating that it felt as if their assets could be sold against their will.
In attempting to simplify the description we also realised that reframing the mechanism as a form of auction made the lack of immediate ownership more agreeable.

Instead of thinking of NFT sales as atomic and final, \erc{7526} can be understood as a modified English auction.
The seller lists the token for sale at what can be thought of as the reserve, $\cost$.
The buyer accepts the reserve by paying $\cost$, also dictating the next allowable bid according to $\price(\rev)$.
Should another bidder accept the dictated bid, they too lock in a next bid and the process continues.
Unlike a typical auction, the seller only receives the reserve, and differences between intermediate bids are paid to the losing \emph{bidders} as a reward for participation (recall that they also have their $\phi(x_i)$ reimbursed).

Traders who are looking for a quick flip and believe they are privy to ``alpha'' (i.e. an information assymetry) may see this participation reward as an incentive and thus provide market liquidity.

\subsubsection{Marketplace support}

Although it would be in the best interests of buyers, there is no requirement for immediate marketplace integrations.
In the absence of integrations, other avenues for fee payment can be made available, including direct smart-contract interaction.
This, however, would make for a poor user experience through additional steps and a significant probability of accidental non-disclosure resulting in unforeseen take-backs.

We recommend that creators only make use of \erc{7526} once there is at least one supporting marketplace, and that they explicitly communicate the implications of trading elsewhere.
This provides a natural incentive for the first platforms to offer support, and a disincentive for those that don't as use of non-supporting marketplaces will be the root cause of take-backs.

While the mechanism makes it difficult for dishonest users to avoid fee payments, this is a necessary but insufficient condition for successful adoption.
It must also be trivial for rule-abiding users to avoid accidental mistakes that might result in token loss.
We therefore recommend that \erc{7526}-compliant marketplaces forward royalty payments in the same transaction as sales, rendering a take-back impossible.
Marketplaces may also use heuristics to assist buyers in choosing default values for $\rev$, thereby alleviating some of the perceived complexity of the mechanism.

\subsubsection{Choice of $\fee$ and $\pi$}

Our analyses aimed to prove the most general cases possible, placing minimal restrictions on the fee and price functions.
In practice, however, we expect to see linear fee functions $\fee(\rev) = \rho \rev$ with $\rho \in (0,0.1)$ in keeping with current market trends.
We further expect $\price$ to be the identity function as this is the simplest to understand.

\subsubsection{Arbitrage bots for enforcement}

The lower bound on $\fmv_\P$ established in \Cref{lem:fmv-lower-bound} requires a means of automation to be realistic.
While we can't expect sellers to be available at short notice to buy tokens back upon underrepporting, the arbitrage can be exercised through automation whereby services monitor for opportunities and execute buy-backs on behalf of users.

Similarly, we envisage automated detection of other arbitrage opportunities.
The open availability of marketplace order books establishes a lower bound on $\fmv_\P$, commonly referred to as the ``floor price''.
Any buyer attempting to underpay fees can't rationally disclose any value that results in an auto-sale price below the floor.

\subsubsection{Wrapping}

Token wrapping is a concept in which a smart contract becomes the direct owner of a token $t$, issuing a ``wrapped'' derivative $w$ to the previous owner $a_t$, which initiated the wrapping.
The derivative is itself an NFT and can be traded without royalties, finally being owned by $a_w$ who can choose to ``unwrap'' it---the smart contract then transfers $t$ from itself to $a_w$ instead of back to $a_t$.

As \Cref{rul:takeback} is permanent, a wrapping contract would need to pay royalties to remove $a_t$ from $H$ so there is no take-back risk faced by $a_w$.
Even though the smart contract's address becomes a member of $H$ when unwrapping, as long as the contract code doesn't implement take-backs, $a_w$ remains protected.

Although this allows all but one royalty payment to be avoided, it also relies on a broadly trusted wrapper contract---buyers of token $w$ would have to be assured that they undoubtedly have access to $t$.
While expert users can confirm such access, typical buyers would need to place significant trust in unknown third parties, facing risks that they can avoid through royalty payment.

\subsection{Further research}

The parameterisation of $\fee$ based on the duration for which auto-sales are possible is a natural extension to the proposed mechanism.
Even if a buyer exists for some $\fmv_\P$, a sale isn't guaranteed because the buyer might not have enough time to discover the offer during the auto-sale window.
A shorter window therefore acts in favour of players who wish to underpay while keeping the token as the disincentive against underreporting is reduced.
Conversely, the mechanism benefits from extending the period and might therefore accept a lower fee.
We suspect that players may be able balance the incentives by paying an additional fee in return for curtailing auto-sale windows, or vice versa.

\ifACM
\else
    \newenvironment{acks}{\subsection*{Acknowledgements}}{}
\fi

\begin{acks}
    Both authors thank Scott Kominers, Tim Roughgarden and Pranav Garimidi for feedback on this work.
    Arran Schlosberg thanks Ron Lavi for guidance into the world of mechanism design.
    The authors began work on this manuscript while engaged by Proof Holdings.
\end{acks}

\bibliographystyle{ec24style/ACM-Reference-Format}
\bibliography{references}

\end{document}